\title{\LARGE \bf
Energy Disaggregation via Adaptive Filtering
}
\author{Roy Dong, Lillian J. Ratliff, Henrik Ohlsson, and S. Shankar Sastry
\thanks{R. Dong, L. Ratliff, H. Ohlsson,  and S. S. Sastry are with the
  Department of Electrical Engineering and Computer Sciences,
  University of California at Berkeley, CA, USA
  {\tt\small\{roydong,ratliffl,ohlsson,sastry\}}
  {\tt\small@eecs.berkeley.edu}}
\thanks{H. Ohlsson is also with the Division of Automatic Control, Department of Electrical Engineering, Link\"oping University, Sweden.}%
\thanks{The work presented is supported by the NSF
Graduate Research Fellowship under grant DGE 1106400, NSF
CPS:Large:ActionWebs award number 0931843, TRUST (Team for Research in
Ubiquitous Secure Technology) which receives support from NSF (award
number CCF-0424422), and FORCES (Foundations Of Resilient
CybEr-physical Systems), the European Research Council
   under the advanced grant LEARN, contract 267381, a postdoctoral grant from the Sweden-America
   Foundation, donated by ASEA's Fellowship Fund, and  by a postdoctoral
   grant from the Swedish Research Council.}%
}
\begin{document}

\maketitle
\thispagestyle{empty}
\pagestyle{empty}

\begin{abstract}
The energy disaggregation problem is recovering device level power consumption signals from the aggregate power consumption signal for a building. 
We show in this paper how the disaggregation problem
can be reformulated as an adaptive filtering
problem. This gives both a novel disaggregation algorithm and 
a better theoretical understanding for disaggregation. In particular,
we show how the disaggregation problem can be solved online using a 
filter bank and discuss its optimality.
\end{abstract}

\section{INTRODUCTION}
\label{sec:introduction}
Power consumption data of individual devices have the potential to greatly decrease costs in the electricity grid. Currently, residential and commercial buildings account for 40\% of total energy consumption~\cite{Perez-Lombard2008}, and studies have estimated that 20\% of this consumption could be avoided with efficiency improvements with little to no cost~\cite{Creyts2007,Laitner2009}. It is believed that the largest barrier to achieving these energy cost reductions is due to behavioral reasons~\cite{Crabtree2008}. 

The authors of~\cite{Armel2013} claim that the full potential of the new smart meter technology cannot be exploited if human factors are not considered; that is, we must recognize that the smart grid is a system with a human in the loop. Furthermore, the authors note that billions of dollars are being expended on the installation of smart meters, which can provide the utility company with high resolution data on a building's power consumption. However, this hardware currently only provides the aggregate power consumption data, and deployment is at a sufficiently advanced stage that a change in hardware is prohibitively expensive.

Disaggregation presents a way in which consumption patterns of individuals can be learned by the utility company. This information would allow the utility to present this information to the consumer, with the goal of increasing consumer awareness about energy usage. Studies have shown that this is sufficient to improve consumption patterns~\cite{Ehrhardt-Martinez2010}. 

Outside of informing consumers about ways to improve energy efficiency, disaggregation presents an opportunity for utility companies to strategically market products to consumers. It is now common practice for companies to monitor our online activity and then present advertisements which are targeted to our interests. This is known as `personalized advertising'. Disaggregation of energy data provides a means to similarly market products to consumers. This leads to the question of user privacy and the question of ownership with regards to power consumption information.
Treatment of the issue of consumer privacy in the smart grid is outside the scope of this paper. However, this is discussed in~\cite{cardenas2012}.

Additionally, disaggregation also presents opportunities for improved control. Many devices, such as heating, ventilation, and air conditioning (HVAC) units in residential and commercial buildings implement control policies that are dependent on real-time measurements. Disaggregation can provide information to controllers about system faults, such as device malfunction, which may result in inefficient control. It can also provide information about energy usage which is informative for demand response programs.

Our aim in this paper is to formulate the disaggregation problem in the filter banks framework. In doing so we extend our previous work in which we developed a method that combines the use of generative models, e.g. linear dynamical models of devices, with a supervised approach to disaggregation~\cite{Dong2013}.  In particular, we develop an algorithm for disaggregation of whole building energy data using dynamical models for devices and filter banks for determining the most likely inputs to the dynamical models. Under mild assumptions on the noise characteristics we are able to provide guarantees for when the algorithm recovers the disaggregated signal that most closely matches our observed data and priors.

In Section \ref{sec:background}, we discuss previous work on the topic of energy disaggregation. In Section \ref{sec:problem_formulation}, we formally define the problem of energy disaggregation. In Sections \ref{sec:dev_mod} to \ref{sec:disagg}, we establish our framework for solving the problem of energy disaggregation. In Section \ref{sec:implementation}, we provide an online adaptive filtering algorithm for estimating individual device power consumption patterns, and in Section \ref{sec:theory}, we prove properties of this algorithm. 
In Section \ref{sec:experiment}, we show energy disaggregation results from a small-scale experiment. Finally, in Section \ref{sec:conclusions}, we give concluding remarks and describe plans for future work.



\section{BACKGROUND}
\label{sec:background}
%
%
The problem of energy disaggregation, and the existing hardware for disaggregation, has been studied extensively in the literature (see~\cite{berges2010:lj, berges2009:gd}, for example).
The goal of the current disaggregation literature is to present methods for improving energy monitoring at the consumer level without having to place sensors at device level, but rather use existing sensors at the whole building level. 
The concept of disaggregation is not new; however, only recently has it gained attention in the energy research domain, likely due to the emergence of smart meters and big data analytics, as discussed in Section \ref{sec:introduction}.

Disaggregation, in essence, is a single-channel source separation problem. The problem of recovering the components of an aggregate signal is an inverse problem and as such is, in general, ill-posed. Most disaggregation algorithms are batch algorithms and produce an estimate of the disaggregated signals given a batch of aggregate recordings.  There have been a number of survey papers summarizing the existing methods (e.g. see \cite{Zeifman2011:fh}, \cite{Kolter2011}). In an effort to be as self-contained as possible, we try to provide a broad overview of the existing methods and then explain how the disaggregation method presented in this paper differs from existing solutions.  

The literature can be divided into two main approaches, namely, supervised and unsupervised. Supervised disaggregation methods require a disaggregated data set for training. This data set could be obtained by, for example, monitoring typical appliances using plug sensors. Supervised methods assume that the variations between signatures for the same type of appliances is less than that between signatures of different types of appliances. Hence, the disaggregated data set does not need to be from the building that the supervised algorithm is designed for. However, the disaggregated data set must be collected prior to deployment, and come from appliances of a similar type to those in the target building. Supervised methods are typically discriminative. 

Unsupervised methods, on the other hand, do not require a disaggregated data set to be collected. They do, however, require hand tuning of parameters, which can make it hard for the methods to be generalized in practice. It should be said that also supervised methods have tuning parameters, but these can often be tuned using the training data. 

The existing supervised methods include sparse coding \cite{Kolter2010}, change detection and clustering based approaches \cite{drenker1999:rk,rahayu2012:sk} and pattern recognition \cite{farinaccio1999:lk}. The sparse coding approach tries to reconstruct the aggregate signal by selecting as few signatures as possible from a library of typical signatures. Similarly, in our proposed framework we construct a library of dynamical models and reconstruct the aggregate signal by using as few as possible of these models.  

The existing unsupervised methods include factorial hidden Markov models (HMMs), difference hidden Markov models and variants \cite{kim11,Kolter2012,Johnson2012,parson2012:AAAI:lk,pattem2012:gk} and temporal motif mining \cite{Shao2012}. Most unsupervised  methods model the on/off sequences of appliances using some variation of HMMs. These methods do not directly make use of the signature of a device and assume that the power consumption is piecewise constant. 

All methods we are aware of lack the use of 
the dynamics of the devices. While the existing supervised methods often do use device signatures, these methods are discriminative and an ideal method would be able to generate a power consumption signal from a given consumer usage profile. 
Both HMMs and linear dynamical models are generative as opposed to discriminative, making them more advantageous for modeling complex system behavior. In the unsupervised domain, HMMs are used; however, they are not estimated using data and they do not model the signature of a device. 

In a previous paper we developed a method which combines the use of generative models, i.e. linear dynamical models of devices, with a supervised approach to disaggregation~\cite{Dong2013}. In this paper, we extend previous work by formalizing our method within an adaptive filtering framework. Specifically, we formulate hypotheses on the on/off state of the devices over the time horizon for which we have data. The on/off state corresponds to whether the input is activated or not. Using filter banks and the dynamical models we have for device behavior, we evaluate which is the most likely hypothesis on the inputs. We provide an algorithm for this process. Under mild assumptions on the noise characteristics we are able to provide guarantees for when the our algorithm results in an optimal solution. The filter bank framework is similar to HMM frameworks in the sense that both methods essentially formulate hypotheses on which devices are on at each time instant. However, in contrast to HMMs, in the filter bank framework we incorporate the use of dynamical models to capture the transients of the devices, which helps identify them.

\section{PROBLEM FORMULATION}
\label{sec:problem_formulation}


In this section, we formalize the problem of energy disaggregation.

Suppose we are given an aggregated power consumption signal for a building. We denote this data as $y[t]$ for $t = 0,1,\dots,T$, where $y[t]$ is the aggregate power consumption at time $t$. The entire signal will be referred to as $y$. This signal is the aggregate of the power consumption signal of several individual devices:
\begin{equation}
	y[t] = \sum_{i = 1}^D y_i[t] \text{ for } t = 0,1,\dots,T,
\end{equation}
where $D$ is the number of devices in the building and $y_i[t]$ is the power consumption of device $i$ at time $t$. The goal of disaggregation is to recover $y_i$ for $i = 1,2,\dots,D$ from $y$.

To solve this problem, it is necessary to impose additional assumptions on the signals $y_i$ and the number of devices $D$.


\section{PROPOSED FRAMEWORK}
\label{sec:big_framework}
At a high level, our framework can be summarized as follows. First, in
the training phase of our disaggregation framework, we assume we have access to a training set of individual device power consumption data that is representative of the devices in the buildings of concern. From this training data, we build a library of models for individual devices. With these models, the disaggregation step becomes finding the most likely inputs to these devices that produces our observed output, the aggregate power consumption signal.

\subsection{Training phase}
\label{sec:dev_mod}

Suppose we have a training data set, which consists of the power consumption signals of individual devices. Let $z_i[t]$ for $t = 0,1,\dots,T_i$ be a power consumption signal for a device $i$. Then, $\{z_i\}_{i = 1}^D$ is our training data. 
From this training data, we will learn models for individual devices.

For device $i$, we assume the dynamics take the form of a finite impulse response (FIR) model:
\begin{equation}
\label{eq:fir}
	z_i[t] = \sum_{j = 0}^{n_i} b_{i,j} u_i^z[t-j] + e_i[t],
\end{equation}
where $n_i$ is the order of the FIR model corresponding to device $i$, $b_{i,j}$ represent the parameters of the FIR model and $e_i[t]$ is white noise, i.e.\ random variables that are zero mean, finite variance, and independent across both time and devices. Furthermore, $u_i^z[t]$ represents the input to device $i$ at time $t$ in the training dataset, $z$.

We now make the following assumption:
\begin{assumption}
\label{ass:1}
FIR models fed by piecewise constant inputs give a rich enough setup to model the
energy consumed by individual appliances.
\end{assumption}

Firstly, many electrical appliances can be seen having a piecewise
constant input. 
For example, the input of  a conventional oven can be seen as
$0^\circ$F if the oven is off, and  $300^\circ$F if the oven is set to
heat to $300^\circ$F. 
Note that the input is not the actual internal temperature of the
oven, but rather the temperature setting on the oven. Since the
temperature setting is relatively infrequently changed, the input is
piecewise constant over time. Many other appliances are either on or
off, for example lights, and can be seen having a binary input with
infrequent changes. This is also a piecewise constant input. For a
washing machine, we have a discrete change between modes (washing,
spinning, etc.) and this mode sequence can be seen as the
piecewise constant input of the washing machine. 

Secondly, a FIR model can fit arbitrarily complex stable linear
dynamics. Assuming that FIR models fed by piecewise constant inputs
give a rich enough setup to model the energy consumed by individual
appliances is therefore often sufficient for energy disaggregation.

Thirdly, without any assumption on the inputs, the disaggregation problem later presented in Section~\ref{sec:disagg} is ill-posed; thus, Assumption~\ref{ass:1}, which assumes that changes in input are sparse, serves as a regularization which helps make the problem less ill-posed.

In most applications, we will not have access to any input data. Thus, our system identification step becomes estimation of both the input and the FIR parameters. This is known as a blind system identification problem, and is generally very difficult.

However, with the assumption that the inputs represent an on/off sequence, we can use simple change detection methods to estimate the binary input  
$u_i^z$. For more complicated inputs, we refer to \cite{Ohlsson2013}.


Although $n_i$ is not known a priori, we can select the value of $n_i$ using criterion from the system identification and statistics literature. For example, one can use the Akaike information criterion (AIC) or the Bayesian information criterion (BIC). For more information on model selection, as well as other possible criteria for model selection, we refer the reader to \cite{Ljung1999}.

Finally, we can succinctly rewrite \eqref{eq:fir} in vector form:
\begin{equation}
\label{eq:fir_vec}
	z_i[t] = \betab_i^\top \xib_i[t] + e_i[t],
\end{equation}
where $\betab_i$ are the FIR parameters:
\begin{equation}
\label{eq:beta_def}
	\betab_i =
	\begin{bmatrix}
		b_{i,0} 	& b_{i,1} 		& \dots	 	& b_{i,n_i}
	\end{bmatrix}^\top,
\end{equation}
and $\xib_i[t]$ are the regressors at time $t$:
\begin{equation}
	\xib_i[t] = 
	\begin{bmatrix}
		u_i^z[t] 	& u_i^z[t-1] 	& \dots 	& u_i^z[t-n_i]
	\end{bmatrix}^\top.
\end{equation}


\subsection{Energy disaggregation}
\label{sec:disagg}

Suppose we have estimated a library of models for devices $i = 1,2,\dots,D$. That is, we are given $\betab_i$ for devices $i = 1,2,\dots,D$. 
Furthermore, we are given $y$. We wish to find $y_i$ for $i = 1,2,\dots,D$. Now, we make the following assumption:
\begin{assumption}
\label{ass:all_modeled}
	The devices in our building are a subset of the devices $\{1,2,\dots,D\}$. Furthermore, these devices have dynamics of the form in \eqref{eq:fir_vec}.
\end{assumption}
Note here that we assume that all devices are modeled in our library,
or, equivalently, all devices are represented in our training
data. This is a common assumption in the disaggregation literature but
we plan to relax this assumption in future work. 

Now, this problem is equivalent to finding inputs to our devices that generate our observed aggregated signal. 
More explicitly, let:
\begin{equation}
	\betab =
	\begin{bmatrix}
		\betab_1^\top & \betab_2^\top & \dots & \betab_D^\top
	\end{bmatrix}^\top,
\end{equation}
\begin{equation}
	\psib[t] =
	\begin{bmatrix}
		\psib_1[t]^\top & \psib_2[t]^\top & \dots & \psib_D[t]^\top
	\end{bmatrix}^\top,
\end{equation}
where $\betab_i$ are as defined in \eqref{eq:beta_def} for each device $i = 1,2,\dots,D$ and:
\begin{equation}
\label{eq:psi_defy}
	\psib_i[t] =
	\begin{bmatrix}
		u_i[t] 	& u_i[t-1] 	& \dots 	& u_i[t-n_i]
	\end{bmatrix}^\top.
\end{equation}
Then, 
we have a model for the aggregate power signal:
\begin{equation}
\label{eq:agg_model1}
	y[t] = \betab^\top \psib[t] + e[t],
\end{equation}
where $e[t] = \sum_{i = 1}^D e_i[t]$ is still white noise. For simplicity, we assume zero initial conditions, i.e.\ $u_i[t] = 0$ for $t = -n_i, -n_i+1, \dots, -1$. This assumption can easily be relaxed. Thus, the problem of energy disaggregation is now finding $u_i[t]$ for $t = 0,1,\dots,T$ and $i = 1,2,\dots,D$. 
Let:
\begin{equation}
	\ub[t] = 
	\begin{bmatrix}
		u_1[t] 	& u_2[t] & \dots & u_D[t]
	\end{bmatrix}^\top.
\end{equation}
Recall that in the training phase we assumed that FIR models fed by
piecewise constant inputs gave a rich enough setup for accurately modeling energy
consumption of individual appliances. We will in the disaggregation
step similarly assume that $u_i[t], i=1,\dots,D,$ are piecewise
constant over time. It follows that the vector-valued function $\ub$ is piecewise constant. 

Let a segment be defined as an interval in which $\ub$ is constant. Then, energy disaggregation becomes a segmentation problem. More formally, define a segmentation as $k^n = (k_1,k_2,\dots,k_n)$ such that $0 \leq k_1 < k_2 < \dots < k_n$. Here, both $n$ and $k_l,\,l=1,\dots,n,$ are unknown. For a segmentation $k^n$, we have that:
\begin{equation}
	\ub[s] = \ub[t] \text{ for all } k_{l-1} < s,t \leq k_l,
\end{equation}
with $k_0 = -1$.

Here, we will introduce some additional notation which will be helpful for the rest of this paper.

First, we introduce an alternative notation for segmentations. Let $\delta[t] = 1$ if $\ub[t] \neq \ub[t-1]$, and $0$ otherwise. In other words, $\delta[t]$ is a binary variable that equals $1$ if and only if the input changes between times $t-1$ and $t$. Thus, $k^n = (k_1,k_2,\dots,k_n)$ and $\delta \in \{0,1\}^T$ are equivalent representations of a segmentation. 
Throughout this paper we shall freely move between the two.

Next, suppose we are given a segmentation $k^n$. Then for each device $i$, we can define a function $\bar u_i : \{1,2,\dots,n\} \rightarrow \R$ such that $\bar u_i(l) = u_i[k_l]$, i.e.\ $\bar u_i(l)$ represents the input to device $i$ in the $l$th segment. Then, let $\ubarb : \{1,2,\dots,n\} \rightarrow \R^D, l \mapsto (\bar u_1(l), \bar u_2(l), \dots, \bar u_D(l))$. $\ubarb(l)$ represents the input to all devices in the $l$th segment.

Also, let $y^t$ denote all measurements available at time $t$. That is, $y^t = (y[0],y[1],\dots,y[t])$.

Let $p(\ub)$ denote a probability distribution on the user's input to the devices; that is, $p(\ub)$ is the likelihood of the input $\ub$. 
This encapsulates our prior on user consumption patterns. For example, in residential buildings, power consumption tends to be low mid-day, while in commercial buildings, power consumption drops off after work hours. This knowledge can be represented in $p(\ub)$. 

The disaggregation problem is to find the maximum a posteriori (MAP)
estimate of $\ub$ and, consequently, the power consumption of device $i$, 
given our observations. In Section \ref{sec:implementation}, we provide an adaptive filtering algorithm for solving this problem, and in Section \ref{sec:theory}, we provide theoretical guarantees of our proposed algorithm.

%



There are many  criteria other than the MAP by which to select a segmentation. The best criteria for selection of a segmentation is an active topic of debate, and a thorough treatment of this question is outside the scope of this paper. We refer the interested reader to \cite{Gustafsson2000} for more details on  segmentation.


\section{ENERGY DISAGGREGATION VIA ADAPTIVE FILTERING}
\label{sec:implementation}
\subsection{Algorithm definition}

In this section, we provide a tractable algorithm to solve the problem posed in Section \ref{sec:big_framework}. Furthermore, this algorithm is defined recursively on measurements across time, so it can be run online.

We draw on results in the adaptive filtering literature. An adaptive filter is any filter that adjusts its own parameters based on observations. In our particular case, we use a filter bank approach to handle the problem presented in Section~\ref{sec:big_framework}. A filter bank is a collection of filters, and the adaptive element of a filter bank is in the insertion and deletion of filters, as well as the selection of the optimal filter.

We will define a filter bank, and also the problem a filter bank solves. Suppose we are given measurements $y^t$. We wish to find the maximum a posteriori estimate of the input $\ub$ given our measurements $y^t$: we wish to find $\ub$ that maximizes $p(\ub \vert y^t)$, which is equivalent to maximizing $p( y^t \vert \ub) p(\ub)$. Decomposing $u$ into $\delta$ and $\ubarb$, we can again rewrite this as maximizing $p(y^t \vert \ubarb,\delta) p(\ubarb \vert \delta)p(\delta)$. Note that we can calculate:
\begin{equation}
p(\delta) = \int p(\ubarb, \delta) d\ubarb.
\end{equation}
The final manipulation is that we wish to find a $\delta$ to maximize the following quantity:
\begin{equation}
\label{eq:ubarb_max}
	\max_{\ubarb}\ p(y^t \vert \ubarb,\delta) p(\ubarb \vert \delta)p(\delta).
\end{equation}

Now, our algorithm maintains a collection of filters, known as a filter bank. Let $\Fc$ denote this filter bank. Each filter $f \in \Fc$ corresponds to a segmentation $\delta_f \in \{0,1\}^t$. Given each $\delta_f$, we can calculate:
\begin{equation}
	\ubarb_f = \argmax_{\ubarb}\ p(y^t \vert \ubarb,\delta_f) p(\ubarb \vert \delta_f)p(\delta_f).
\end{equation}
\begin{equation}\label{eq:posterior}
	p_f = \max_{\ubarb}\ p(y^t \vert \ubarb,\delta_f) p(\ubarb \vert \delta_f)p(\delta_f).
\end{equation}
There are only finitely many possible $\delta$. Thus, if we kept a
filter $f$ for every possible segmentation $\delta$, we could easily
find the MAP estimate of $\ub$. However, the filter bank $\Fc$ would grow exponentially with time. Thus, we need to find methods to keep the size of $\Fc$ under control.

The process of finding the best segmentation can be seen as exploring a binary tree. That is, a segmentation $\delta$ can be thought of as a leaf node on a binary tree of depth $t$. This is visualized in Figure \ref{fig:binary_tree}.


\begin{figure}[ht]
          \begin{tikzpicture}[>=stealth',shorten >=1pt,auto,node distance=1.25 cm, scale = 1.25, transform shape]
  \draw (-1.75,0) node[fill=blue!25, circle, minimum size=0.5cm] (A) {\footnotesize root} (4,0);
  \draw (0.5,1) node[fill=blue!50, circle, minimum size=0.5cm] (B1) {$0$};
  \draw (0.5,-1) node[fill=blue!50, circle, minimum size=0.5cm] (B2) {$1$};
  \draw (2.75, 1.65) node[fill=blue!50, circle, minimum size=0.5cm] (C1) {$0$};
  \draw (2.75, 0.65) node[fill=blue!50, circle, minimum size=0.5cm] (C2) {$1$};
  \draw (2.75, -0.65) node[fill=blue!50, circle, minimum size=0.5cm] (C3) {$0$};
  \draw (2.75, -1.65) node[fill=blue!50, circle, minimum size=0.5cm] (C4) {$1$};
  \draw (-2.25, -2.5) node (D1) {};
  \draw (3.5, -2.5) node (D2) {};
  \path[->, thick] (D1) edge (D2);
  \draw (3.5, -2.5) node[below] (t) {$t$};
  \draw (0.5, -2.75) node (t1) {};
  \draw (0.5, -2.25) node (t2) {};
  \path (t1) edge (t2);
  \draw (2.75, -2.75) node (t3) {};
  \draw (2.75, -2.25) node (t4) {};
  \path (t3) edge (t4);
  \draw (2.75, -2.6) node[below] {$1$};
  \draw (0.5, -2.6) node[below] {$0$};
  \path[->, dblue, very thick] (A) edge (B1);
  \path[->, dblue, very thick] (B1) edge (C1);
  \path[->, dgreen, very thick] (B1) edge (C2);
  \path[->, very thick,gray] (A) edge (B2);
  \path[->, very thick,gray] (B2) edge (C3);
  \path[->, very thick, gray] (B2) edge (C4);
  \draw (1.15, -0.55) node (dd1) {};
  \draw (2.0, -1.1) node (dd2) {};
  \path[densely dashed, red, very thick] (dd1) edge (dd2);
  \draw (3.75, 2) node (D1) {};
  \draw (3.75, 1.3) node (D2) {};
  \path[->, gray!75, very thick, densely dotted] (C1) edge (D1);
  \path[->, gray!75, very thick, densely dotted] (C1) edge (D2);
  \draw (3.75, 1) node (D3) {};
  \draw (3.75, 0.3) node (D4) {};
  \path[->, gray!75, very thick, densely dotted] (C2) edge (D3);
  \path[->, gray!75, very thick, densely dotted] (C2) edge (D4);
  \draw (3.75, -0.3) node (D5) {};
  \draw (3.75, -1) node (D6) {};
  \path[->, gray!75, very thick, densely dotted] (C3) edge (D5);
  \path[->, gray!75, very thick, densely dotted] (C3) edge (D6);
  \draw (3.75, -1.3) node (D7) {};
  \draw (3.75, -2) node (D8) {};
  \path[->, gray!75, very thick, densely dotted] (C4) edge (D7);
  \path[->, gray!75, very thick, densely dotted] (C4) edge (D8);
  \draw[fill=black] (3.75,0) circle(0.25ex);
  \draw[fill=black] (3.95,0) circle(0.25ex);
  \draw[fill=black] (4.15,0) circle(0.25ex);

\end{tikzpicture}
	\caption{A segmentation $\delta$ can be thought of as a leaf node on a binary tree of depth $t$. That is, $\delta$ corresponds exactly to one leaf node of this binary tree. 
	If we choose to branch the blue path $(0)$ at time $0$, then we add the green path $(0,1)$ as well as the blue path $(0,0)$. 
	If we choose not to branch $(0)$, then only $(0,0)$ would be added to our filter bank. The red dotted-line depicts pruning; this would involve removing $(1,0)$ and all its children from the tree.}
	\label{fig:binary_tree}
\end{figure}

Limiting the growth of $\Fc$ can be done by deciding which branches to expand and which branches to prune. This sort of formulation lends itself very easily to an online formulation of the filter banks algorithm. In fact, it is more intuitive to think of the algorithm in an online fashion.

At time $t$, we choose to branch a filter only if it corresponds to
one of the most likely segmentations. 
 By branching, we refer to exploring both the $0$ and $1$ branches. This 
 is depicted by the blue and green lines in Figure \ref{fig:binary_tree}.
 Otherwise, we will merely extend the last segment of the segmentation, i.e. only follow the $0$ branch. Additionally, at time $t$, we prune any paths that have sufficiently low likelihood. That is, we remove the filter $f$ from $\Fc$ if $p_f < p_{\text{thres}}$, where $p_{\text{thres}}$ is an algorithm parameter. This is depicted by the red dotted line in Figure \ref{fig:binary_tree}.

Finally, we can exhibit our algorithm. It is presented in Figure \ref{alg:oedfb}.

\begin{figure}
\begin{algorithmic}[1]

\State{	Initialize $t \leftarrow 0$, $f_0\leftarrow\delta_{f_0} = (0)$, $f_1\leftarrow\delta_{f_1} = (1)$, and $\Fc\leftarrow\{f_0,f_1\}$.}
\State{ Pick algorithm parameter $p_{\text{thres}}$.}
\While{ TRUE } 
\LineComment{Find the filters that correspond to the most likely }
\LineComment{segmentations given $y^t$.}
\State{$\Fc'\leftarrow\emptyset$.}
\For{$f \in \Fc$}
\If{$p_f = \max_{f' \in \Fc}p_{f'}$}
\State{Add a copy of $f$ to $\Fc'$.}
\EndIf
\EndFor
\LineComment{When available, update all filters in $\Fc$ with the}
\LineComment{new measurement.}
\State{ Wait for new measurement $y[t+1]$.}
\For{$f \in \Fc$} 
\State{Append $0$ to $\delta_f$.
Recalculate $\ubarb_f$ and $p_f$.}
\EndFor
\LineComment{Branch the filters corresponding to the most likely }
\LineComment{segmentations given $y^t$.}
\For{$f' \in \Fc'$} 
\State{Append $1$ to $\delta_{f'}$.}
\State{Recalculate $\ubarb_{f'}$ and $p_{f'}$.}
\State{Add $f'$ to $\Fc$.}
\EndFor
\LineComment{Prune elements from the filter bank that have }
\LineComment{unlikely segmentations.}
\For{$f \in \Fc$}
\If{$p_f < p_{\text{thres}}$.}
\State{Remove $f$ from $\Fc$.}
\EndIf
\EndFor
\State{$t \leftarrow t + 1$.}
\EndWhile
\end{algorithmic}
	\caption{Algorithm for Online energy disaggregation via filter banks (OEDFB).}	\label{alg:oedfb}
\end{figure}

As presented currently, the algorithm give in Figure \ref{alg:oedfb}
is a high-level algorithm. We now discuss a specific implementation.
First, we will add some assumptions.
\begin{assumption}
\label{ass:steady_state}
	In the true $\ub$, each segment has length greater than or equal to $N = \max\ \{n_1,n_2,\dots,n_D\}$.
\end{assumption}
This assumption places a minimum length of a segment for our piecewise constant input $\ub$; it asserts that each device is in steady-state before a device changes state.

Let $y_{\text{ss},i}(k^n,l)$ denote the steady state value of $y_i$, the power consumption for the $i$th device, in the $l$th segment of $k^n$. That is:
\begin{equation}
y_{\text{ss},i}(k^n,l) = \sum_{j = 0}^{n_i} b_{i,j} \bar u_i(l).
\end{equation}
Then, let $y_{\text{ss}}(k^n,l)$ denote the steady-state value of $y$ in the $l$th segment of $k^n$; thus:
\begin{equation}
y_{\text{ss}}(k^n,l) = \sum_{i = 1}^D y_{\text{ss},i}(k^n,l).
\end{equation}
$y_{\text{ss}}$ can be directly estimated from our observations $y$, independently of our values for $\ubarb$. We will assume $y_{\text{ss}}(k^n,l)$ is known from this point onward. Additionally, let $y_{\text{ss}}(k^n,0) = 0$.

Suppose we are given a segmentation $\delta$. We will convert the estimation of $\ubarb$ into the estimation of the change in $\ubarb$. 
To such end, 
define $\Delta \ubarb(l) = \ubarb(l)-\ubarb(l-1)$, the change in input from segment $l-1$ to segment $l$, with $\Delta \ubarb(0) = 0$. 
Then, with Assumption \ref{ass:steady_state}, linearity implies that our dynamics take the following form:
\begin{align}
\begin{split}
	y[t] - y_{\text{ss}}(k^n,l-1) = \Lc_{k_{l-1}+1}(\Delta \ubarb(l), t) + e[t] \\
	\quad \text{ for } k_{l-1} < t \leq k_l,
\end{split}
\end{align}
where $\Lc_{k_l-1}(\Delta \ubarb(l), t)$ is the value of the zero-state step response at time $t$ of the aggregated system model in  \eqref{eq:agg_model1} to a step of $\Delta \ubarb(l)$ beginning at time $k_{l-1}+1$. Note that this linear function can easily be calculated from $\betab$.

The essential point of this equation is that, since all the devices are in steady state at the beginning of the $l$th segment, the actual values of $\ubarb(l-1)$ and $\ubarb(l)$ do not matter; the dynamics depend only on the change $\Delta \ubarb(l)$. Thus, we can estimate $\Delta \ubarb(l)$ separately for each segment $l$.

Furthermore, we consider the following prior. Suppose we have a bound on how much the input can change from segment to segment. That is, we know $\Delta \ub_{\text{min}}, \Delta \ub_{\text{max}}$ such that $\Delta \ub_{\text{min}} \leq \Delta \ubarb(l) \leq \Delta \ub_{\text{max}}$ for all $l$. Furthermore, $p(\ubarb \vert \delta)$ is a uniform distribution within these bounds.

Finally, if the noise term in  \eqref{eq:agg_model1} is Gaussian white noise with fixed variance $\sigma^2$, then the calculations of $\ubarb_f$ and $p_f$ are relatively straightforward. Let $y^l$ denote the portion of $y$ in segment $l$:
\begin{equation}
y^l = 
\begin{bmatrix}
y[k_{l-1}+1] & y[k_{l-1}+2] & \dots & y[k_l]
\end{bmatrix}^\top.
\end{equation}
By a slight abuse of notation, simply let $\Lc(\ub)$ denote the zero-state response of  \eqref{eq:agg_model1} to a step of $\ub$. We can find $\Delta \ubarb(l)$ by solving the following least-squares problem:
\begin{align}
\begin{split}
	\min_{\Delta \ubarb}\ 	& \| y^l - y_{\text{ss}}(k^n,l-1) - \Lc(\Delta \ubarb) \|_2^2 \\
	\subjto\ 	& \Delta \ub_{\text{min}} \leq \Delta \ubarb \leq \Delta \ub_{\text{max}}.
\end{split}
\end{align}
This will give us $\ubarb_f$. Let:
\begin{equation}
e[t] = y[t] - y_{\text{ss}}(k^n,l-1) - \Lc(\Delta \ubarb(l)) \text{ for } k_{l-1} < t \leq k_l
\end{equation}
We can also calculate:
\begin{equation}
p_f = c p(\delta) \prod_{s = 0}^t \exp\left(-\frac{e[s]^2}{2\sigma^2}\right)
\end{equation}
where $c$ is a constant that is independent of $\delta$ and $\ubarb$.


\section{THEORY}
\label{sec:theory}
One of the benefits of our framework is that it allows us to leverage results from adaptive filtering. In this section, we prove theorems relating to the algorithm presented in Section~\ref{sec:implementation}.

Let $\widehat {\delta^t}$ denote any segmentation such that:
\begin{equation}
\label{eq:map_def}
	p(\widehat {\delta^t} \vert y^t) = \max_{\delta \in \{0,1\}^t} p(\delta \vert y^t),
\end{equation}
for any $t \in \{0,1,\dots,T\}$. That is, $\widehat {\delta^t}$ denotes a maximum a posteriori  estimate of the segmentation $\delta$ at time $t$. We can now apply the following result:
\begin{theorem} (\emph{Optimality of partial MAP estimates} \cite{Gustafsson2000})
	\label{th:map}
	Let $t$ be any arbitrary time in $\{0,1,\dots,T\}$, and let $t_0$ be any time such that $0 \leq t_0 \leq t$. Let $\delta$ be any binary sequence of length $t$ such that $\delta[t_0] = 1$. Let $\delta_1$ denote the first $t_0-1$ elements of $\delta$ and $\delta_2$ denote the last $t - t_0$ elements of $\delta$. That is: $\delta = (\delta_1,1,\delta_2)$.
	
	If Assumption \ref{ass:steady_state} holds and if $\Delta \ubarb(l)$ and $\Delta \ubarb(m)$ are independent given $\delta$ for $l \neq m$, then:
	\begin{equation}
		p(\delta \vert y^t) \leq p((\widehat {\delta^{t_0-1}}, 1,\delta_2) \vert y^t)
	\end{equation}
\end{theorem}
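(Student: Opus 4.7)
The plan is to exploit the change at $t_0$ together with Assumption~\ref{ass:steady_state} to factor the posterior $p(\delta \mid y^t)$ into a piece that depends only on $\delta_1$ and $y^{t_0-1}$ and a piece that depends only on the tail $(1,\delta_2)$ and $y_{t_0:t} := (y[t_0],\dots,y[t])$, with any residual coupling to the past passing solely through the observable steady-state value at time $t_0-1$. Once such a factorization is in hand, maximizing over $\delta_1$ is by definition attained at $\widehat{\delta^{t_0-1}}$, which gives the stated inequality.

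First I would apply Bayes' rule to write $p(\delta \mid y^t) \propto p(y^t \mid \delta)\,p(\delta)$ and split $y^t = (y^{t_0-1}, y_{t_0:t})$. Causality of the FIR model in~\eqref{eq:agg_model1} immediately yields $p(y^{t_0-1} \mid \delta) = p(y^{t_0-1} \mid \delta_1)$, since nothing after time $t_0-1$ can influence the past observations. This already reduces the problem to understanding how $\delta_1$ enters $p(y_{t_0:t} \mid y^{t_0-1}, \delta)$.

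The crux of the argument is to show that $p(y_{t_0:t} \mid y^{t_0-1}, \delta) = p(y_{t_0:t} \mid y^{t_0-1}, 1, \delta_2)$, i.e.\ that $\delta_1$ drops out of the post-$t_0$ likelihood once we condition on $y^{t_0-1}$. This is where Assumption~\ref{ass:steady_state} enters: because the segment ending at $t_0-1$ has length at least $N$, all devices are in steady state at $t_0-1$, so for $t \geq t_0$ the dynamics are governed only by the steady-state offset $y_{\text{ss}}$ at time $t_0-1$ and the fresh jump $\Delta\ubarb$ of the new segment starting at $t_0$. Since $y_{\text{ss}}$ is a statistic of $y^{t_0-1}$ (as argued in Section~\ref{sec:implementation}), the history $\delta_1$ contributes no further information. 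The hypothesis that the $\Delta\ubarb(l)$'s are independent given $\delta$ is then used to marginalize out the segment inputs one segment at a time, so that the segments inside $\delta_1$ and those inside $\delta_2$ decouple in the integrated likelihood.

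Combining these steps, and using that the change-point prior factors across the cut at $t_0$ as $p(\delta_1, 1, \delta_2) = p(\delta_1)\,p(1, \delta_2)$ (automatic for any i.i.d.\ change-point prior, as is standard in the segmentation literature), yields
\begin{equation*}
p(\delta \mid y^t) \;\propto\; \bigl[p(y^{t_0-1} \mid \delta_1)\,p(\delta_1)\bigr]\,\bigl[p(y_{t_0:t} \mid y^{t_0-1}, 1, \delta_2)\,p(1, \delta_2)\bigr].
\end{equation*}
The second bracket does not depend on $\delta_1$, and by definition $\widehat{\delta^{t_0-1}}$ maximizes the first bracket over length-$(t_0-1)$ sequences; the claim follows. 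The hard part will be the conditional-independence step: one must rigorously verify that Assumption~\ref{ass:steady_state} truly eliminates any dependence of the post-$t_0$ likelihood on $\delta_1$ beyond what $y^{t_0-1}$ already encodes, and that reading $y_{\text{ss}}$ off of $y^{t_0-1}$ does not reintroduce such a dependence. The factorization of $p(\delta)$ is a secondary subtlety that should be made explicit if a non-product prior is used.
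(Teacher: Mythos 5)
Your proposal is correct and takes essentially the same route as the paper: both factor the posterior across the changepoint at $t_0$, use causality together with Assumption~\ref{ass:steady_state} (all devices in steady state at the cut) and the conditional independence of the $\Delta \ubarb(l)$ to make the post-$t_0$ factor independent of $\delta_1$, and then invoke the definition of $\widehat{\delta^{t_0-1}}$ in \eqref{eq:map_def} to bound the pre-$t_0$ factor. The only difference is presentational --- you factor likelihood times prior and split the data at $t_0$, whereas the paper chains conditional posteriors over $(\delta_1,1,\delta_2)$ --- and your explicit flagging of the prior factorization $p(\delta_1,1,\delta_2)=p(\delta_1)\,p(1,\delta_2)$ makes visible an assumption the paper uses implicitly in its third equality.
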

\begin{proof}
	Note that our hypotheses together imply that $\Delta \ubarb(l)$ and $\Delta \ubarb(m)$ are independent given $\delta$ and $y^t$ for $l \neq m$. Thus:
	\begin{equation}
		\begin{array}{rcl}
			p(\delta \vert y^t) 	& = & p((\delta_1,1,\delta_2) \vert y^t, \delta[t_0] = 1) p(\delta[t_0] = 1 \vert y^t) \\
							& = & p(\delta_1 \vert y^t, \delta[t_0] = 1) p(\delta_2 \vert y^t, (\delta_1,1)) \cdot \\
							&& \quad p(\delta[t_0] = 1 \vert y^t) \\
							& = & p(\delta_1 \vert y^{t_0-1}) p(\delta_2 \vert y^t, \delta[t_0] = 1) p(\delta[t_0] = 1 \vert y^t) \\
							& \leq & p(\widehat {\delta^{t_0-1}} \vert y^{t_0-1}) p(\delta_2 \vert y^t, \delta[t_0] = 1) \cdot \\
							&& \quad p(\delta[t_0] = 1 \vert y^t) \\
							& = & p(\widehat {\delta^{t_0-1}} \vert y^t) p(\delta_2 \vert y^t, \delta[t_0] = 1) \cdot \\
							&& \quad p(\delta[t_0] = 1 \vert y^t) \\
							& = & p((\widehat {\delta^{t_0-1}}, 1,\delta_2) \vert y^t)
		\end{array}
	\end{equation}
	where, by a slight abuse of notation, $p(\delta_1 \vert \delta_2)$ denotes the likelihood that the first $t_0-1$ elements of the true $\delta$ are equal to $\delta_1$ given that the last $t - t_0$ elements are equal to $\delta_2$.
	
	The first and second equalities utilize Bayes' law. Causality
        and independence of segments imply that $\delta_2$ does not
        depend on $\delta_1$ given that $\delta[t_0] = 1$ and that
        $\delta_1$ does not depend on later measurements given that
        $\delta[t_0] = 1$. This gives us the third equality. The
        inequality follows from the definition of $\widehat
        {\delta^{t_0 - 1}}$, given in  \eqref{eq:map_def}. The final equalities are similar to the first equalities.
\end{proof}

This theorem implies that, conditioned on a change at time $t_0$, the
MAP sequence at time $t$ must begin with the MAP sequence at time
$t_0$. Also, note that under Assumption \ref{ass:steady_state}, we have that $\Delta \ubarb(l)$ and $\Delta \ubarb(m)$ are independent given $\delta$ for $l \neq m$.

We can now assert the following claims about our algorithm:
\begin{theorem}
\label{th:branch}
(\emph{Optimality of the proposed algorithm's branching policy} \cite{Gustafsson2000})
	Consider the algorithm given in Figure \ref{alg:oedfb} with $p_{\text{thres}} = 0$. Suppose Assumption \ref{ass:steady_state} holds and $\Delta \ubarb(l)$ and $\Delta \ubarb(m)$ are independent given $\delta$ for $l \neq m$.
	
	Fix any time $t$, and let $\Fc$ be the filter bank at time $t$. Then, there exists an $f \in \Fc$ such that $\delta_f = \widehat {\delta^t}$.
\end{theorem}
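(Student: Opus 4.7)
The plan is to prove this by induction on $t$, using Theorem~\ref{th:map} as the main lever. The invariant to maintain is that at every time $t$, some filter $f \in \Fc$ satisfies $\delta_f = \widehat{\delta^t}$.

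For the base case, I would note that immediately after initialization $\Fc = \{f_0, f_1\}$ contains both length-one sequences $(0)$ and $(1)$, so whichever of the two is the MAP segmentation is trivially present.

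For the inductive step, assume the invariant at time $t$. To handle $\widehat{\delta^{t+1}}$, I would apply Theorem~\ref{th:map} with $t_0$ equal to the position of the last $1$ in a MAP sequence: the theorem lets us replace such a sequence by one of the form $\widehat{\delta^{t+1}} = (\widehat{\delta^{t_0-1}}, 1, 0, \ldots, 0)$, i.e.\ a MAP prefix of length $t_0 - 1$, followed by a $1$ at position $t_0$, followed by zeros. I would then split into two cases. If $\widehat{\delta^{t+1}}$ is identically zero, the required filter is the initial $f_0 = (0)$: because it is only ever extended by $0$ in the algorithm's append step, and with $p_{\text{thres}} = 0$ it is never pruned, it carries the all-zero segmentation of the correct length at every time step. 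Otherwise, by the inductive hypothesis some filter at time $t_0 - 1$ carries $\delta_f = \widehat{\delta^{t_0-1}}$; this filter attains the maximum of $p_f$ over $\Fc$ (being a MAP), so the algorithm copies it to $\Fc'$ and branches by appending a $1$, placing $(\widehat{\delta^{t_0-1}}, 1)$ into $\Fc$ after that iteration. In every subsequent step this filter has a $0$ appended, and since $p_{\text{thres}} = 0$ it is never pruned, so at time $t+1$ it carries $(\widehat{\delta^{t_0-1}}, 1, 0, \ldots, 0) = \widehat{\delta^{t+1}}$, completing the induction.

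The main obstacle I anticipate is aligning the algorithm's indexing, where $t$ increments only at the end of the loop and the new measurement arrives mid-loop, with the length of $\delta_f$ used in the statement of Theorem~\ref{th:map}. A related subtlety is to justify that the filter carrying the MAP actually attains $\max_{f' \in \Fc} p_{f'}$: this requires identifying $p_f$ from \eqref{eq:posterior} with a monotone transformation of the posterior $p(\delta_f \vert y^t)$ used in \eqref{eq:map_def}, which holds up to factors independent of $\delta$ once one commits to interpreting the MAP as the joint maximizer over $(\ubarb, \delta)$ permitted by the uniform prior on $\ubarb$.
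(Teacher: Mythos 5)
Your proof is correct and follows the same route as the paper's: Theorem~\ref{th:map} is used to show that a MAP segmentation can always be taken to begin with a MAP prefix up to its last change point, so branching only the most likely filters (and never pruning, since $p_{\text{thres}} = 0$) preserves a MAP estimate in $\Fc$. The paper's own proof is a two-sentence sketch of exactly this induction; your version additionally spells out the all-zero base case and the identification of $p_f$ in \eqref{eq:posterior} with the MAP criterion of \eqref{eq:map_def}, both of which the paper leaves implicit.
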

\begin{proof}
If $\widehat {\delta^t}[s] = 1$ for some $s$, then the first $s-1$ elements of $\widehat {\delta^{t}}$ is a MAP estimate at time $s - 1$. This follows from Theorem \ref{th:map}. This means that, at time $s$, we only need to branch the most likely segmentations.
\end{proof}

Theorem \ref{th:branch} states that, in the case of no pruning, any MAP estimate will still be present in the filter bank. In other words, maximizing over the reduced set of filters in the filter bank will be equivalent to maximizing over every single possible segmentation.

This theorem also gives rise to our corollary. First, let $(\widehat {\delta^t})^s$ denote the first $s$ elements of $\widehat {\delta^t}$. Then:
\begin{corollary}
\label{th:prune}
(\emph{Optimality of proposed algorithm's pruning policy})
	Consider the algorithm given in Figure \ref{alg:oedfb} with $p_{\text{thres}} > 0$. Suppose Assumption \ref{ass:steady_state} holds and $\Delta \ubarb(l)$ and $\Delta \ubarb(m)$ are independent given $\delta$ for $l \neq m$.
	
	Fix any time $t$, and let $\Fc$ be the filter bank at time $t$. If $p( (\widehat {\delta^t})^s \vert y^s ) \geq p_{\text{thres}}$ for all $0 \leq s < t$, then there exists an $f \in \Fc$ such that $\delta_f = \widehat {\delta^t}$.
\end{corollary}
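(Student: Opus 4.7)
The plan is to prove the corollary by induction on $s \in \{1,2,\dots,t\}$, showing that at every intermediate time $s$ the filter bank $\Fc_s$ contains a filter $f$ with $\delta_f = (\widehat{\delta^t})^s$. Setting $s = t$ then yields the conclusion. The inductive skeleton is the same as in Theorem \ref{th:branch}: branching preserves MAP prefixes because Theorem \ref{th:map} forces any MAP segmentation to begin with a MAP segmentation whenever it has a change. The new ingredient here is the hypothesis $p((\widehat{\delta^t})^s \vert y^s) \geq p_{\text{thres}}$, which I will use to show that pruning cannot delete the MAP-prefix filter at any intermediate step. The base case $s = 1$ is immediate: the algorithm initializes $\Fc = \{f_0, f_1\}$ with $\delta_{f_0} = (0)$ and $\delta_{f_1} = (1)$, so whichever of these equals $(\widehat{\delta^t})^1$ is present by construction.

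For the inductive step, assume $(\widehat{\delta^t})^s$ is represented by some $f_s \in \Fc_s$, and consider the two cases. If $\widehat{\delta^t}[s+1] = 0$, the ``append $0$'' loop in Figure \ref{alg:oedfb} extends every filter of $\Fc_s$ by a zero, so the extension of $f_s$ representing $(\widehat{\delta^t})^{s+1}$ appears automatically. If instead $\widehat{\delta^t}[s+1] = 1$, I invoke Theorem \ref{th:map} with $t_0 = s+1$: because $\widehat{\delta^t}$ attains the maximum posterior at time $t$ and has a change at position $s+1$, the resulting inequality must be an equality, forcing $(\widehat{\delta^t})^s$ to itself be a MAP segmentation at time $s$. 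Therefore $p_{f_s} = \max_{f' \in \Fc_s} p_{f'}$, so the algorithm copies $f_s$ into $\Fc'$ and, after appending a $1$, reinserts it into $\Fc$; this places $(\widehat{\delta^t})^{s+1}$ into $\Fc_{s+1}$ just before the pruning step. In either case, it then remains to check that the pruning loop does not erase the newly obtained filter: interpreting the algorithm's $p_f$ and the posterior $p(\delta_f \vert y^{s+1})$ under the common max-marginalization convention that the paper uses throughout Section \ref{sec:implementation} (so that they agree up to a factor shared by all filters at time $s+1$), the hypothesis $p((\widehat{\delta^t})^{s+1} \vert y^{s+1}) \geq p_{\text{thres}}$ directly gives survival of the MAP-prefix filter.

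The main obstacle is the case split at the inductive step: Theorem \ref{th:map} only applies at positions where $\widehat{\delta^t}$ actually has a change, so one cannot argue uniformly, but must treat ``no change'' steps via the algorithm's built-in zero extension and ``change'' steps via the MAP-prefix property extracted from Theorem \ref{th:map}. A secondary subtlety worth being careful about is the bookkeeping between the algorithm's unnormalized $p_f$ (which uses a max over $\ubarb$ in place of an integral, as in \eqref{eq:posterior}) and the conditional probabilities appearing in the corollary's hypothesis; this is consistent as long as one compares filters at a single time step, where the common $p(y^s)$ factor cancels, and it is why the pruning threshold can be applied on either side of the equivalence without altering the proof.
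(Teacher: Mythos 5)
Your proof is correct and follows essentially the same route as the paper: the paper's own proof is a one-line appeal to the branching optimality of Theorem \ref{th:branch} (so that $\widehat{\delta^t}$ would be present absent pruning) plus the observation that the hypothesis prevents its prefixes from ever being pruned, which is exactly the induction you spell out in detail. Your added care about the equality case in Theorem \ref{th:map} at change points and the normalization relating $p_f$ to $p(\delta_f \vert y^s)$ fills in steps the paper leaves implicit, but does not change the argument.
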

\begin{proof}
	If $p(\widehat{\delta^s} \vert y^s) \geq p_{\text{thres}}$ for all $0 \leq s < t$, then $\widehat {\delta^t}$ will never be pruned.
\end{proof}

Corollary \ref{th:prune} states a condition for when an MAP estimate will still be present in the filter bank at time $t$.

\section{EXPERIMENT}
\label{sec:experiment}
\subsection{Experimental setup}

To test our disaggregation method, we deployed a small-scale experiment. To collect data, we use the emonTx wireless open-source energy monitoring node from OpenEnergyMonitor\footnote{ {\tt{http://openenergymonitor.org/emon/emontx}}}. We measure the current and voltage of devices with current transformer sensors and an alternating current (AC) to AC power adapter. For each device $i$, we record the root-mean-squared (RMS) current $I_{\text{RMS}}^i$, RMS voltage $V_{\text{RMS}}^i$, apparent power $P_{\text{VA}}^i$, real power $P_{\text{W}}^i$, power factor $\phi_{\text{pf}}^i$, and a coordinated universal time (UTC) stamp. The data was collected at a frequency of 0.13Hz.

Our experiment focused on small devices commonly found in a residential or commercial office building. First, we recorded plug-level data $z_i$ for a kettle, a toaster, a projector, a monitor, and a microwave. 
These devices consume anywhere from 70W to 1800W. For each device, we fit a fifth-order FIR model as outlined in Section \ref{sec:dev_mod}.

Then, we ran an experiment using a microwave, a toaster, and a kettle operating at different time intervals. 
These measurements form our ground truth $y_i$, and we also sum the signals to get our aggregated power signal $y = \sum y_i$. The individual plug measurements are shown in Figure \ref{fig:meas_exp}. It is worth commenting that the power consumption signals for individual devices are not entirely independent; one device turning on can influence the power consumption of another device. This coupling is likely due to the non-zero impedance of the power supply system. However, we found this effect to be negligible in our disaggregation algorithms.

\begin{figure}[ht]
	\begin{center}
	\includegraphics[width=\columnwidth]{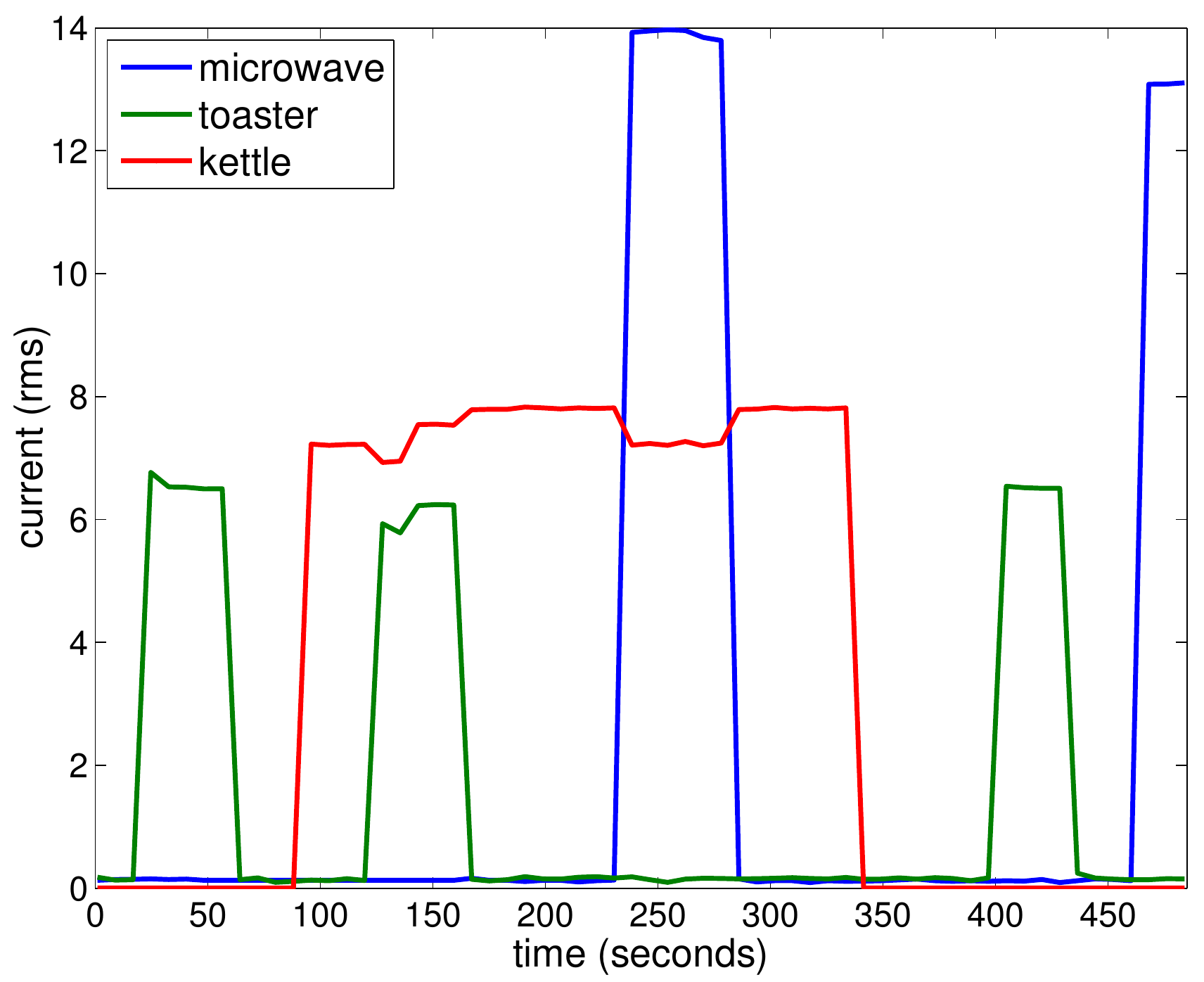}
	\end{center}
	\caption{The measurements of individual plug RMS currents.}
	\label{fig:meas_exp}
\end{figure}

\subsection{Implementation details}

In practice, we observed that many devices seem to have different dynamics between when they switch on and when they switch off. For example, consider the root-mean-squared (RMS) current of a toaster in Figure \ref{fig:toaster}. There is an overshoot when the toaster switches on, but the the dynamics when the device shuts off do not exhibit the same behavior. In fact, in all of the devices we measured, we found that when a devices switches off, the power consumption drops down to a negligible amount almost immediately. That is, we do not observe any transients when a device turns off. We modify the models from Section \ref{sec:dev_mod} to encapsulate this observation.


\begin{figure}[ht]
	\begin{center}
	\includegraphics[width=\columnwidth]{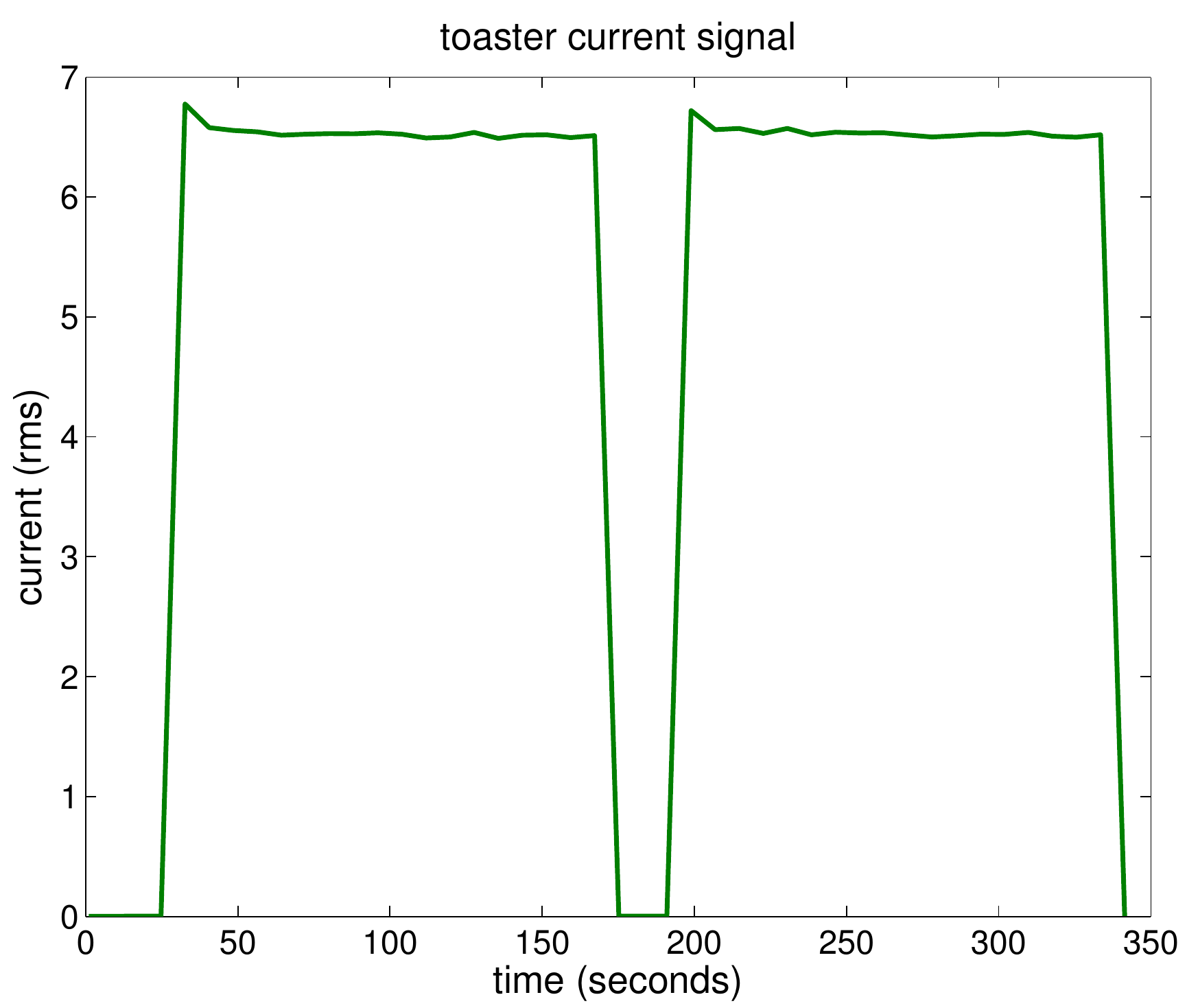}
	\end{center}
	\caption{The measured RMS current signal for a toaster. Note that the on-switches display overshoot while the off-switches do not.}
	\label{fig:toaster}
\end{figure}

Several heuristics are used for pruning the binary tree depicted in Figure \ref{fig:binary_tree} that are specific to the task of disaggregation. First, we do not bother considering branches if the most likely segmentation explains the data sufficiently well. This greatly reduces the growth of the filter bank across time. Furthermore, we assume that at most one device switches on or off in any given time step. This unfortunately violates the assumptions of Theorem \ref{th:map}, but we find that it gives good results in practice. 

\subsection{Results}

The disaggregation results are presented in Figure \ref{fig:exp_inputs}. We can see that the segmentation is correctly identified. Visually, the results also line up well. 

\begin{figure}[ht]
	\begin{center}
	\includegraphics[width=\columnwidth]{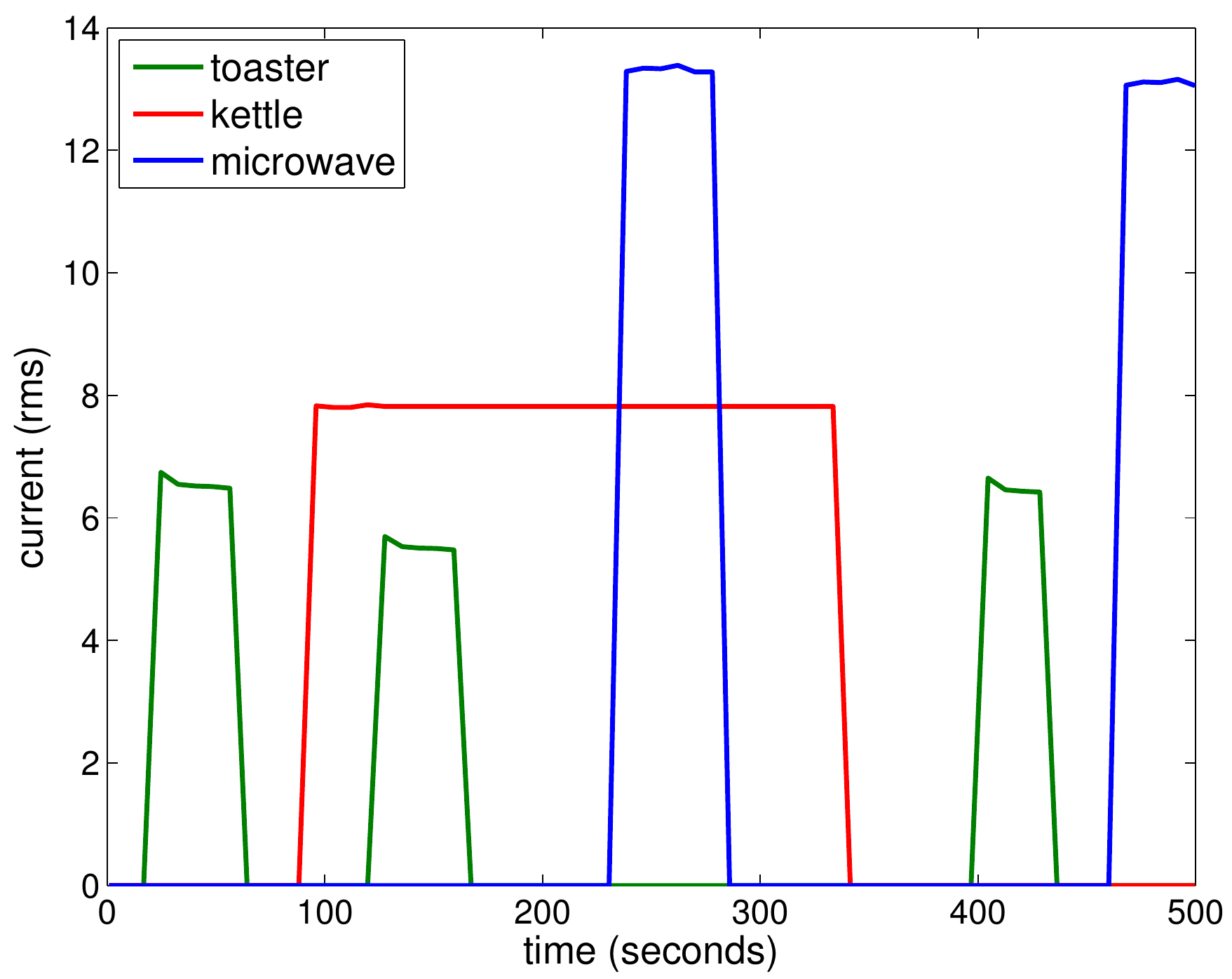}
	\end{center}
	\caption{The estimated power consumption signals of each device.}
	\label{fig:exp_inputs}
\end{figure}

We also note that it is not fair to compare results from our small-scale experiment with many of the methods mentioned in Section \ref{sec:background}. Most of the methods listed are unsupervised methods which do not have a training set of data \cite{Kolter2011,Shao2012,Kolter2012,Johnson2012}. Since these unsupervised methods do not learn from training data, they have many priors which must be tuned towards the devices in the library. Also, the sparse coding method in \cite{Kolter2010} requires a large amount of disaggregated data to build a dictionary.

\section{CONCLUSIONS AND FUTURE WORK}
\label{sec:conclusions}

In the work presented, we formalized the disaggregation problem within the filter banks framework. We provide an algorithm with guarantees on the recovery of the true solution given some assumptions on the data.

%

From the point of view of the utility company, the question of how to use this data to inform the consumer about their usage patterns and how to develop incentives for behavior modification 
is still largely an open one, which we are currently studying.

Another largely open question is the one concerning privacy. Given that energy data can be disaggregated with some degree of precision, how does this affect the consumer's privacy?  The next natural step is to study how this data can be used in a privacy preserving way to improve energy efficiency. These privacy preserving policies may come in the form of selectively transmitting the most relevant data for a control objective, or incentive mechanisms for users to change their consumption behavior without direct transmission of their private information to the utility company. 
We are currently examining both approaches to the privacy issue.

\section*{ACKNOWLEDGMENTS}
The authors would like to thank Professor Lennart Ljung for many insights into this problem and Aaron Bestick for countless discussions and assistance in experimental setup.


\bibliographystyle{IEEEtran}
\bibliography{main}

\end{document}